\def\includegraphics{}
\newtheorem{theorem}{Theorem}[section]
\newtheorem{lemma}[theorem]{Lemma}
\newtheorem{definition}{Definition}[section]
\definecolor{darkgrn}{rgb}{0, 0.75, 0}
\newcommand{\dels}{\texttt{$\Delta$-screening}}
\newcommand{\delsab}{\texttt{$\Delta$S}}
\newcommand{\louvain}{\texttt{Louvain}}
\newcommand{\slm}{\texttt{SLM}}
\newcommand{\dlouvain}{\texttt{dLouvain-$\Delta$s}}
\newcommand{\dslm}{\texttt{dSLM-$\Delta$s}}
\newcommand{\dlouvainbs}{\texttt{dLouvain-base}}
\newcommand{\dslmbs}{\texttt{dSLM-base}}
\newcommand{\Rset}{$\mathcal{R}_t$}
\newcommand{\Ci}{$C_{t-1}(i)$}
\newcommand{\Cjstar}{$C_{t-1}(j_*)$}
\newcommand{\etal}{{\it et al.}}
\title{\LARGE \bf
A Fast and Efficient Incremental Approach toward Dynamic Community Detection
}
\author{Neda Zarayeneh$^{1}$ and Ananth Kalyanaraman$^{1}$\thanks{$^{1}$N. Zarayeneh and A. Kalyanaraman are with the 
School of Electrical Engineering and Computer Science, Washington State
        University, Pullman, WA 99163, USA.
       Email:  {\tt\small neda.zarayeneh@wsu.edu, ananth@wsu.edu}}
}
\begin{document}

\maketitle
\thispagestyle{empty}
\pagestyle{empty}


\begin{abstract}
Community detection is a discovery tool used by network
scientists to analyze the structure of real-world networks. It seeks to
identify natural divisions that may exist in the input networks that
partition the vertices into coherent modules (or communities). While this
problem space is rich with efficient algorithms and software, most of this
literature caters to the static use-case where the underlying network does
\emph{not} change. However, many emerging real-world use-cases give rise to a
need to incorporate dynamic graphs as inputs. 

In this paper, we present a fast and efficient incremental approach toward
dynamic community detection. The key contribution is a generic technique
called \dels, which examines the most recent batch of changes made to an
input graph and selects a subset of vertices to reevaluate for potential
community (re)assignment. This technique can be incorporated into any of the
community detection methods that use modularity as its objective function for
clustering. 
For demonstration purposes, we incorporated the technique into two well-known
community detection tools. 
Our experiments demonstrate that our new incremental approach is able to
generate performance speedups without compromising on the output quality
(despite its heuristic nature). For instance, on a real-world network with 63M temporal edges (over 12 time steps), our approach was able to complete in 1056 seconds, yielding a 3$\times$ speedup over a baseline implementation. In addition to demonstrating the performance benefits, we also show how to
use our approach to delineate appropriate intervals of temporal resolutions at
which to analyze an input network.
\end{abstract}

\section{Introduction}
\label{sec:Intro}

Community detection is a fundamental problem in many graph applications. The goal of community detection is to identify tightly-knit groups of vertices in an input network, such that the members of each ``community'' share a high concentration of edges among them than to the rest of the network. 
Owing to its ability to reveal natural divisions that may exist in a network (in an unsupervised manner), community detection has become one of the fundamental discovery tools in a network scientist's toolkit. 
The operation is widely used in a variety of application domains including (but not limited to) social networks, biological networks, internet and web networks, citation and collaboration networks, etc. 

Designing efficient algorithms and implementations for community detection has been an area of active research for well over a decade. 
While theoretical formulations are known to be NP-Hard \cite{brandes2008modularity}, there are a number of efficient heuristics and related software already available. 
A comprehensive review of community detection methods and related applications is available in \cite{fortunato2010community}.
However, most of the existing tools target static networks, where the input graph cannot change.
On the other hand, most real-world networks are dynamic in nature, where vertices and edges can be added and/or removed over a period of time. 

Owing to the increasing availability of dynamic networks, the problem of dynamic community detection has become an actively researched topic of late, and
multiple methods have been proposed over the last decade (e.g., \cite{aktunc2015dynamic,greene2010tracking,zakrzewska2015dynamic}). 
In Section~\ref{sec:Related} we present a brief review of such related works.
Despite these advances,
a key remaining challenge in the design of these algorithms is in quickly identifying
the parts of the graph that are likely to be impacted by a change (or
collectively by a
recent batch of changes), so that it
becomes possible to update the community information with minimal
recomputation effort.

{\bf Contributions:}
In this paper, we propose an algorithmic technique and a corresponding incremental approach that would complement the developments made in dynamic community methods, and in particular those that use the modularity function \cite{newman2004fast} as their clustering objective. 
More specifically, the main contributions are as follows:
\begin{compactenum}[i)]\itemsep=-0.05ex
\item
We visit the problem of identifying vertex subsets that are likely to be impacted by the most recent batch of changes made to the graph. 
To address this problem, we present a technique called \dels, which can be efficiently implemented and incorporated as part of existing dynamic community algorithms that use modularity. 
\item 
To demonstrate and evaluate this technique, we incorporated the technique
into two well-known classical community detection methods---namely, the
Louvain method \cite{blondel2008} and the SLM method
\cite{waltman2013smart}---thereby generating two incremental clustering
implementations.
\item 
Using these two implementations, we present a thorough experimental
evaluation on both synthetic and real-world inputs. Our results show that the
\dels{} technique is effective in pruning work (to reduce recomputation
effort) without compromising on output quality.
\item 
In addition to demonstrating its performance benefits, we also show how to
use our approach to delineate appropriate intervals of temporal resolutions
at which to analyze an input network.
\end{compactenum}


\section{Related Work}
\label{sec:Related}

Algorithms to compute dynamic communities over time-evolving graphs can be broadly classified into two types.

One class of methods follows a \emph{two-step} strategy of first
identifying the best set of communities for the current time step and
then subsequently mapping them onto the communities from previous
generations to track evolution. 
Hopcroft \etal{} \cite{hopcroft2004tracking} present a method in which a
static community detection tool is individually applied to the graphs at all time
steps and the results are later combined by computing community
similarities between successive steps. 
Greene \etal{} \cite{greene2010tracking} propose a variation of this
approach where they use a matching-based formulation and a related heuristic to
map the communities of the latest time step to the communities of
previous generations (as identified by a ``front'').

In general, the two-step strategy
is better suited if the magnitude and/or complexity of changes to the
input graph is more drastic or random.  However, the strategy suffers
drawbacks in two ways: It can make tracking of communities difficult as
an independent application of static community detection at each time
step may \emph{not} necessarily preserve previous generations
communities, and therefore the outputs can become non-deterministic.
Secondly, these approaches could become expensive to run on large inputs
as the approach entails significant recomputation for each time step, and
the community tracking (via mapping) process can also be expensive.  

The other class of methods for detecting dynamic communities
follows a more \emph{incremental} strategy where communities from the
previous generation(s) are propagated and updated using changes reflected
in the current time step. 
Maillard \etal{} \cite{maillard2009modularity} propose a modularity-based
incremental approach extending upon the classical Clauset-Newman-Moore
static method \cite{clauset2004finding}. 
Aktunc \etal{} \cite{aktunc2015dynamic} propose a method DSLM as an
extension of its static predecessor \cite{waltman2013smart}. 
Xie \etal{} \cite{xie2013labelrankt} present an incremental method based
on label propagation which is a fast heuristic. 
Saifi and Guillaume \cite{seifi2012community} provide a way to track and update community ``cores'' across time steps.
Zakrzewska and Bader \cite{zakrzewska2015dynamic} present another variant 
that tracks the communities of a selected
set of seed  vertices in the graph. 
The FacetNet approach introduced by Lin \etal{} \cite{lin2008facetnet} is
a hybrid approach that operates with a dual objective of maximizing
modularity for the current time step while trying to also preserve as
much of the previous generation communities. 

In general, the incremental strategy has the advantage in runtime
(because of the reuse of community information from previous steps), and it also
has the advantage of outputting a relatively stable set of communities
across time steps. 
The technique of \dels{} proposed in this paper is aimed at helping these
incremental methods to be able to quickly identify the relevant parts of
the graph that are potentially impacted by a recent batch of changes, so
that the computation effort in the incremental step can be
reduced without compromising on clustering quality.

We note here that most of the existing methods use modularity or one of
its variants as the objective function for optimizing community
structure.
Bassett \etal{} \cite{bassett2013robust} propose and evaluate the choice
of alternative null hypothesis models in modularity-based dynamic
community detection methods.

\section{Method}
\label{sec:Method}

\subsection{Basic Notation and Terminology}
\label{sec:Notation}

A \emph{dynamic graph} $G(V,E)$ can be represented as a sequence of graphs $G_{1}(V_{1},E_{1}),\ldots, G_{T}(V_{T},E_{T})$, where $G_t(V_t,E_t)$ denotes the graph at time step $t$; 
we use $n_t=|V_t|$ and $M_t=|E_t|$.
In this paper, we consider only undirected graphs. 
The graphs may be weighted---i.e., each edge $(i,j)\in E_t$ is associated with a numerical positive weight
$\omega_{ij}\geq 0$;
if the graphs are unweighted, then the edges are assumed to be associated with unit weight,
without loss of generality.
We denote the neighbors of a vertex $i$ as
$\varGamma(i)=\{j\;|\;(i,j)\in E_t\}$. 
We use $m_t$ to denote the sum of the weights of all edges in $G_t$---i.e., $m_t=\sum_{(i,j)\in
E_t}\omega_{ij}$.  
We denote the degree of a vertex $i$ by $d(i)$. 
The \emph{weighted degree} of a vertex $i$, denoted by $d_\omega(i)$, 
is the sum of weights of all edges incident on $i$.

In this paper, we consider incrementally growing dynamic graphs, where edges and vertices can be
added (but not deleted) from one time step to another. This implies that $V_t\supseteq
V_{t-1}$ and $E_t\supseteq E_{t-1}$, for all $1<t\leq T$. 
We denote the newly added edges at any time step $t$ as $\Delta_t=E_t\setminus E_{t-1}$.

We denote the set of communities detected at time step $t$ as $\mathcal{C}_t$.
Note that, by definition, $\mathcal{C}_t$ represents a partitioning of the vertices in $V_t$---i.e., 
each community $C\in\mathcal{C}_t$ is a subset of $V_t$; 
all communities in $\mathcal{C}_t$ are pairwise disjoint; and 
$\bigcup_{C\in\mathcal{C}_t} C=V_t$.

For any vertex $i\in V_t$, we denote the community containing $i$, at any point in the
algorithm's execution, as $C(i)$,
following the convention used in \cite{lu2015parallel}.
Also, let $e_{i\rightarrow C}$ denote the sum of the weights for the edges linking vertex $i$ to
vertices in community $C$---i.e., $e_{i\rightarrow C}=\sum_{j\in
C\cap\varGamma(i)}\omega_{ij}$. 
Furthermore, let $a_C$ denote the sum of the weighted degrees of all vertices in $C$---i.e., 
$a_C=\sum_{i\in C}d_\omega(i)$.


Given the above, the \emph{modularity}, $Q_t$, as imposed by a community-wise partitioning
$\mathcal{C}_t$ over $G_t$, is given by \cite{newman2004fast}:
\begin{equation}\label{eqn:Q}
	Q_t =\frac{1}{2m_t} (\sum_{i\in V_t} e_{i\rightarrow C(i)} - \frac{1}{2m_t} \sum_{C \in \mathcal{C}_{t}} a_{C}^2)  
\end{equation}



Given a community-wise partitioning on an input graph, the \emph{modularity gain} that can be achieved by moving a particular vertex $i$ from its current community to another target community (say $C(j)$) can be calculated in constant time \cite{blondel2008}.
We denote this modularity gain by $\Delta Q_{i\rightarrow C(j)}$.

\subsection{Problem Statement}
\label{sec:problemstatement}

\begin{definition}{\bf  Dynamic Community Detection:}
{\it 
	Given a dynamic graph $G(V,E)$ with $T$ time steps, the goal of
	dynamic community detection is to detect an output set of communities
	$\mathcal{C}_t$ at each time step $t$, that maximizes the modularity
	$Q_t$ for the graph $G_t(V_t,E_t)$.
}
\end{definition}

Since the static version of the modularity optimization problem is NP-Hard \cite{brandes2008modularity}, it immediately follows that the dynamic version is also intractable.

For the static version, a number of efficient heuristics have been developed (as surveyed in \cite{fortunato2010community}). 
These approaches can be broadly classified into three categories:
divisive approaches \cite{girvan2002community,gregory2008fast}, agglomerative approaches \cite{newman2004fast,clauset2004finding}, and multi-level approaches
\cite{de2011generalized,waltman2013smart,karypis1998fast}. 
Of these, the multi-level approaches have demonstrated to be fast and effective at producing high-quality partitioning in practice. 
In Algorithm ~\ref{alg:multi-level} we show a generic algorithmic pseudocode for this class of approaches. 
While they vary in the specific details of how each step is implemented, they share several
common traits (note that this description is for the static use-case):
\begin{itemize}\itemsep=-0.05ex
\item 
At the start of each level, all vertices are assigned to a distinct community id. 
\item
An iterative process is initiated, in which 
all vertices are visited (in some arbitrary order) within each iteration, and a decision is
		made on whether to keep the vertex in its current community, or to migrate it to
		one of its neighboring communities. 
		This decision is typically made in a local-greedy fashion. For instance, 
		in the \louvain{} algorithm \cite{blondel2008}, a vertex
		migrates to a neighboring community that maximizes the modularity
		gain of that vertex---i.e., let $j\in \varGamma(i)\cup \{i\}$. Then, 
		\[C(i) \gets \arg\max_{C(j)} \Delta Q_{i\rightarrow C(j)}\]
\item
	When the net modularity gain resulting from an iteration drops below a certain
		threshold $\tau$, the current level is terminated (i.e., intra-level convergence),
		and the algorithm compacts the graph into a smaller graph by using the
		information from the communities. This procedure represents a graph coarsening
		step, and the coarsened graph is subsequently processed using the same
		iterative strategy until there is no longer an appreciable modularity gain
		between successive levels.
\end{itemize}

Algorithm~\ref{alg:multi-level} succinctly captures the main steps of the multi-level approaches.

\begin{algorithm}
	\SetAlgoNoLine
	\KwIn{$G(V,E)$}
	\KwOut{An assignment $\Pi: V \rightarrow \mathbb{Z}$}
	Initialize $\Pi$ by setting $\mathcal{C}(v) \gets \{v\},\; \forall v\in V$  \\
	\SetEndCharOfAlgoLine{}
	\Repeat{Convergence based on $Q$}{\Repeat{ Convergence based on $Q$}{\For{each $v \in V$}{
        	Compute a local (greedy) function $g(v,\mathcal{C}(v))$ \\
        	$C(v) \gets$ Update community assignment for $v$ using the results from $g(v,\mathcal{C}(v))$\\
        	}
        	Compute a global quality function $Q$ for $\Pi$ }
        Review communities of $\Pi$ (optional step)\\
        $G(V,E) \leftarrow $ Perform graph compaction for next level}
	\Return $\Pi$
	\caption{Abstraction for Multi-level Approaches}
	\label{alg:multi-level}
\end{algorithm}

\subsection{A Naive Algorithm for Dynamic Community Detection}
\label{sec:naive}
A simple approach to address the dynamic community detection problem is to directly apply the
static version of the algorithm (Algorithm~\ref{alg:multi-level}) on the graph at every time
step. However, such an approach suffers from multiple limitations. 
First, it completely ignores the communities identified at the previous time steps. This can
cause outputs to become non-deterministic, as static approaches typically prefer a random
ordering of vertices.
Furthermore, by ignoring the previous community information, the algorithm is essentially
forced to recompute from scratch, and as a result evaluate the community affiliation for \emph{all} vertices at each time step. This can be wasteful in computation.
Consider an edge $(i,j)\in\Delta_t$ that has been newly introduced at time step $t$;
it is reasonable to expect that only those vertices in the ``vicinity'' of
this newly added edge to be impacted by this addition. However, the naive strategy is not suited to exploit such proximity information, thereby negatively impacting performance particularly for large real-world networks where event-triggered changes tend to happen in a more localized manner at different time steps.

\subsection{An Incremental Approach via \dels}
\label{sec:incremental}
Here, we present an alternative approach in which we first identify a subset of
vertices to evaluate at the start of every time step, using the changes $\Delta_t$.
The idea is to identify all (or most) those vertices whose community
affiliation could potentially change
due to $\Delta_t$; the remaining vertices will simply retain their previous community assignments.
This new filtering technique, which we call \dels{} (and abbreviated as \delsab), is generic enough to be applied to any incremental clustering approach that uses modularity.
For the purpose of this paper, we demonstrate it on multi-level approaches. 

More specifically,
let $Static(G)$ denote any static community detection algorithm of choice, that takes in an
input (static) graph $G$ and outputs a set of communities $\mathcal{C}$.
Then, our incremental approach 
is as follows.
\begin{compactenum}
\item At $t=1$, the algorithm simply calls $Static(G_1)$ to output $\mathcal{C}_1$. 
\item For each subsequent time step $t>1$: 
	\begin{compactenum}[a)]
	\item The algorithm initially assigns each pre-existing vertex $i\in V_t\cap V_{t-1}$ to
		the same community label as $\mathcal{C}_{t-1}(i)$.  Each of the remaining
		vertices (i.e., those that were newly added at $t$) is assigned a distinct
		(new) community label.
	\item Next, the algorithm calls a function \dels$(G_t,\Delta_t)$ that
		returns a subset of vertices $\mathcal{R}_t\subseteq V_t$.  This subset
		corresponds to the set of vertices that have been selected for processing
		during time step $t$. 
	\item Subsequently, the algorithm calls $Static\Delta S(G_t, \mathcal{R}_t)$, which is
		a variant of $Static(G_t)$ that loads $G_t$ but visits
		only the vertex subset $\mathcal{R}_t$ for evaluation during each
		iteration---i.e., a modification to the \texttt{for} loop of line \#4 in Algorithm~\ref{alg:multi-level}.
		Note that this procedure that uses
		$\mathcal{R}_t$ is only relevant to the iterations at the first level, as in
		the subsequent levels, the algorithm uses compacted versions of the same graph. 
	\end{compactenum}

\end{compactenum}

To demonstrate the \dels{} technique, we modified two well-known community detection methods:  \louvain{} algorithm \cite{blondel2008}, and smart local moving (\slm) algorithm ~\cite{waltman2013smart}.
We call the resulting modified implementations as \dlouvain{} and \dslm{} respectively. 

Note that there is also a simpler incremental version that can be implemented
for both these methods---by following all steps outlined in our incremental
approach \emph{except} for \dels{} and instead trivially setting
$\mathcal{R}_t=V_t$.  For a comparative assessment of the \dels{} strategy,
we implemented this \emph{baseline} version as well---we refer to the
resulting two implementations as \dlouvainbs{} and \dslmbs{}\footnote{We note
that our \dslmbs{} implementation is in effect same as \cite{aktunc2015dynamic}.} respectively.

\subsection{The \dels{} Scheme}
\label{sec:VS}
In what follows, we describe our \dels{} scheme in detail. 
Given the graph ($G_t$) and changes ($\Delta_t$) at time step $t$, 
the goal of \dels{} is to identify a vertex subset $\mathcal{R}_t\subseteq V_t$ for
reevaluation at time step $t$---i.e., any vertex that is added to $\mathcal{R}_t$ will be
evaluated for potential migration by the iterative clustering algorithm
(Algorithm~\ref{alg:multi-level}); all other vertices are not evaluated (i.e., they retain
their respective community assignment from the previous time step $t-1$).
Our \dels{} scheme is also a heuristic and it does not guarantee the
reproduction of the results from the corresponding baseline version---\dlouvainbs{} or \dslmbs{}---which are also heuristics. 
The main objective here is to save runtime by reducing the number of vertices to process, 
without significantly altering the quality. Despite its heuristic nature,
however, our \dels{} scheme is designed to preserve the key behavioral traits of the baseline version (as we show in lemmas later in this section).

{\bf Algorithm:}
We assume that $\Delta_t$ is stored as a list of \emph{ordered} pairs of the form $(i,j)$. This
implies that for each newly added edge $(i,j)$, there will be two entries stored in
$\Delta_t$: $(i,j)$ and $(j,i)$, as the input graph is undirected.
We refer to the first entry ($i$) of an ordered pair ($(i,j)$) as the ``source'' vertex and
the other vertex ($j$) as the ``sink''. 
Let $S_\Delta$ denote the set of all source vertices in $\Delta_t$, and
$T_\Delta(i)$ denote the set of all sinks for a given source $i$.

Algorithm~\ref{alg:VS} shows the algorithm for \dels. 
We initialize $\mathcal{R}_t$ to $\emptyset$. 
Subsequently, we examine all edges of $\Delta_t$ in the sorted order of its source vertices. 
Sorting helps in two ways: 
It helps us to consider all the new edges incident on a given source vertex
collectively and identify the edge that (locally) maximizes the net
modularity gain (consistent with line \#4 of Algorithm~\ref{alg:VS}). This way we are able to mimic the behavior of the baseline versions which also use the same greedy scheme to migrate vertices. 
Furthermore, this sorted treatment helps reduce overhead by helping to update
$\mathcal{R}_t$ in a localized manner (relative to the source vertices) and
avoiding potential duplications in the computations associated with a vertex.

Once sorted, we read the adjacency list for each source vertex
(Algorithm~\ref{alg:VS}:line \#3), identify a neighbor ($j_*$) that maximizes the modularity gain (line \#4), and update $\mathcal{R}_t$ based on that vertex (line \#8).
However, prior to updating $\mathcal{R}_t$, we check if the selected vertex $j_*$ has a better incentive to move to $i$'s community \Ci{} (line \#7); 
if that happens, then $\mathcal{R}_t$ is not updated from source $i$ and
instead, that decision is left/deferred until $j_*$ is visited as the source. 
This way we avoid making conflicting decisions between source and sink, while decreasing the time for processing (by reducing $\mathcal{R}_t$ size). 
Note that we only use the direction of migration that results in the larger of the two gains for updating $\mathcal{R}_t$. The decision to migrate itself is deferred until the stage of execution of the iterative algorithm. In other words, the \dels{} procedure does \emph{not} modify the state of communities, but it sets the stage for which communities to be visited during the main iterative process.

\begin{algorithm}
	\SetAlgoNoLine
	\KwIn{$G_t$, $\Delta_{t}$}
	\KwOut{$\mathcal{R}_t$: Subset of vertices for reeavaluation }
	$\mathcal{R}_t \leftarrow \emptyset$\\
	Sort edges in $\Delta_t$ based on the source\\
	\For{each $i\in S_\Delta$} {
	  Let $j_*\gets {\arg\max}_{j\in T_\Delta(i)} \{\Delta Q_{i\rightarrow
	  \mathcal{C}_{t-1}(j)} \} $\\
	  Let $gain_1 \gets \Delta Q_{i\rightarrow \mathcal{C}_{t-1}(j_*)}$\\
	  Let $gain_2 \gets \Delta Q_{j_*\rightarrow \mathcal{C}_{t-1}(i)}$\\
	  \If{$gain_1\geq gain_2$ and $gain_1>0$} {
		$\mathcal{R}_t \leftarrow \mathcal{R}_t \cup \{i,j_*\} \cup \varGamma(i) \cup
		\mathcal{C}_{t-1}(j_*)$\\
	  }
	}
	\Return $\mathcal{R}_t$
\caption{\dels{} at time step $t$}
	\label{alg:VS}
\end{algorithm}

The main part of Algorithm~\ref{alg:VS} is on line \#8, where $\mathcal{R}_t$ is updated. 
Our scheme adds the following subset to $\mathcal{R}_t$:
vertices $i$ and $j_*$, all neighbors of $i$ ($\varGamma(i)$), and all vertices in the community containing $j_*$. 
In what follows, using a combination of lemmas, we show that the
$\mathcal{R}_t$ so constructed is positioned to capture all (or most of the)
``essential'' vertices that are likely to be impacted by the edge additions in $\Delta_t$.
In other words, if a vertex is not added to $\mathcal{R}_t$, it can be
concluded that it is less likely (if at all) to be impacted by the changes to
the graph, and therefore it can stay in its previous community state---thereby saving runtime.

\begin{figure}[thpb]
	\centering
	\captionsetup{width=.9\linewidth}
	\framebox{\parbox{3in}{\includegraphics[width= \linewidth]{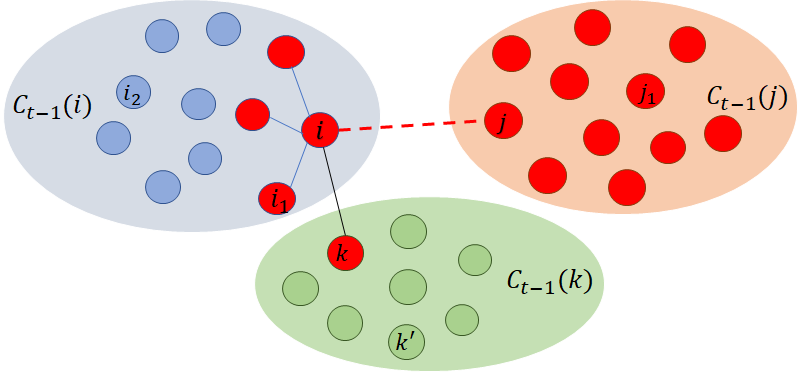}}}
	\caption{
	Figure showing the impact of a newly added edge $(i,j)$, shown in red dotted line. 
	Representative cases of candidate vertices for potential inclusion in $\mathcal{R}_t$ are shown highlighted as labelled vertices. 
	Note that we follow the naming convention of denoting the community containing a vertex $i$ by $C(i)$. Also, note that not all edges are shown. 
	}
	\label{fig:VS}
\end{figure}

In all these lemmas, for sake of convenience (and without loss of generality), we analyze the potential impact of the event represented by moving $i$ to $j_*$'s community. 
Intuitively, the key to populating $\mathcal{R}_t$ is in anticipating which
vertices are likely to alter their community status triggered by this
migration event.
Fig.~\ref{fig:VS} shows the different representative cases that originate for consideration in our lemmas.

First, we claim that any vertex that is a neighbor of $i$ can be potentially impacted. 
\begin{lemma}\label{lemmaGammai}
If $i^\prime\in\varGamma(i)$, then the community state for $i^\prime$ could potentially alter at time step $t$ if $i$ migrates to $C(j_*)$.
\end{lemma}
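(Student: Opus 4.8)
The plan is to establish the claim by tracking how the migration of $i$ into $C(j_*)$ perturbs the modularity-gain landscape seen by its neighbor $i'$, and then arguing that this perturbation can flip the greedy community choice for $i'$. Since the statement is existential (``could potentially alter''), it suffices to isolate the mechanism by which $i'$'s optimal assignment becomes sensitive to $i$'s move, rather than to prove that the change occurs unconditionally; I therefore aim to exhibit a configuration in which the flip actually happens.

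First I would recall that, by the constant-time formula of \cite{blondel2008}, the gain $\Delta Q_{i'\rightarrow C}$ for moving $i'$ to any candidate community $C$ depends on the graph only through the edge-mass $e_{i'\rightarrow C}$ (the total weight of edges from $i'$ into $C$) and the aggregate weighted degree $a_C$. Hence the community that the iterative algorithm greedily selects for $i'$ is completely determined by the collection of pairs $\{(e_{i'\rightarrow C},\,a_C)\}_C$ ranging over the neighboring communities of $i'$, together with the terms for its own community.

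Next I would quantify exactly how $i$'s migration from $C(i)$ to $C(j_*)$ alters these quantities from $i'$'s viewpoint. Because $(i',i)\in E_t$ carries a positive weight $\omega_{i'i}>0$, removing $i$ from $C(i)$ decreases $e_{i'\rightarrow C(i)}$ by $\omega_{i'i}$ and decreases $a_{C(i)}$ by $d_\omega(i)$, while inserting $i$ into $C(j_*)$ increases $e_{i'\rightarrow C(j_*)}$ by $\omega_{i'i}$ and increases $a_{C(j_*)}$ by $d_\omega(i)$. In particular the edge-contribution term of $\Delta Q_{i'\rightarrow C(j_*)}$ strictly rises, so $i'$'s incentive to follow $i$ into $C(j_*)$ is strictly larger after the move than before.

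Finally I would conclude that since at least one entry of $i'$'s gain landscape strictly changes, there exist graph instances in which the community maximizing $i'$'s gain before $i$'s migration differs from the maximizer afterward: concretely one can set the incident weights so that $C(j_*)$ was not the $\arg\max$ for $i'$ prior to the move but becomes so afterward, forcing a reassignment. This shows $i'$'s community state can potentially alter and justifies adding all of $\varGamma(i)$ to $\mathcal{R}_t$. The main obstacle is the sign bookkeeping of the two competing effects, since the favorable edge-gain term $\omega_{i'i}/(2m_t)$ grows while the unfavorable degree-penalty term $d_\omega(i)\,a_{C(j_*)}/(2m_t)^2$ also grows; to close the argument cleanly I would present an explicit regime (e.g. $d_\omega(i)$ small relative to $\omega_{i'i}$, or $2m_t$ large) in which the net gain toward $C(j_*)$ increases enough to overtake $i'$'s previous best choice.
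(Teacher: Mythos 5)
Your proposal is correct and takes essentially the same approach as the paper: both arguments track how $i$'s migration into $C(j_*)$ perturbs the quantities $e_{i'\rightarrow C}$ and $a_C$ entering $i'$'s modularity gains, and conclude that $i'$'s greedy community choice may flip---which is all the lemma's ``could potentially alter'' claim requires. The paper phrases this qualitatively in two subcases ($i'\in C_{t-1}(i)$ versus not), whereas you make the $\pm\omega_{i'i}$ and $\pm d_\omega(i)$ bookkeeping explicit and acknowledge the competing degree-penalty term; this is a sharpening of the same argument, not a different one.
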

\begin{proof}
There are two subcases:
(A) if $i^\prime$ is also in \Ci; and (B) otherwise. 

Subcase (A) is represented by vertex label $i_1$ in Fig.~\ref{fig:VS}. 
If $i$ were to leave \Ci, the strength of the connection of $i_1$ to \Ci{} can only weaken because of a decrease in the positive term of the modularity (Eqn.~\ref{eqn:Q})). 
Even if the negative term of the same equation also decreases (due to departure of $i$ from \Ci, it may or may not be sufficient to keep $i_1$ in \Ci.
Therefore, we add $i_1$ to \Rset.

Subcase (B) is represented by vertex label $k$ in Fig.~\ref{fig:VS}. 
Here, $k$ is in a community different from \Ci.
However, the situation with $k$ is similar to that of $i_1$ in subcase (A), as $k$'s connection to its present community could potentially weaken if it discovers a stronger connection to $C(j_*)$ as a result of $i$'s move.
Therefore, we add $k$ to \Rset.
\end{proof}

Next, we analyze the potential of vertices that are in \Ci{} but \emph{not} in $\varGamma(i)$ to be impacted by the migration of $i$. In fact, we conclude that there is no need to include such vertices in \Rset.

\begin{lemma}\label{lemmaCi}
	If $i^\prime\in$ \Ci, then at time step $t$, a change to the community state of $i^\prime$ is possible, \emph{only if} $i^\prime$ is also a neighbor of $i$.
\end{lemma}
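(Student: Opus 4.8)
The plan is to argue directly from the closed-form modularity gain $\Delta Q_{i'\rightarrow C}$ of relocating the candidate vertex $i'$ out of its current community $C_{t-1}(i)$ into an arbitrary target community $C$, and to track precisely which terms of that expression are perturbed when the triggering event (\,$i$ leaves $C_{t-1}(i)$ and joins $C(j_*)$\,) is applied. Recall that $i'$'s gain is controlled by two kinds of quantities: the edge-connectivity terms $e_{i'\rightarrow\cdot}$, and the aggregate weighted-degree terms $a_{\cdot}$. The pivotal observation is that since $i'\notin\varGamma(i)$ there is \emph{no} edge between $i$ and $i'$, so none of the $e_{i'\rightarrow\cdot}$ terms entering $i'$'s gain can be altered by $i$'s relocation; only the degree sums $a_{C_{t-1}(i)}$ and $a_{C(j_*)}$ move, by $-d_\omega(i)$ and $+d_\omega(i)$ respectively.

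First I would fix the pre-migration state of $i'$: in the partition inherited from step $t-1$, $i'$ is locally stable, i.e. $\Delta Q_{i'\rightarrow C}\le 0$ for every $C$. Next I would recompute $\Delta Q_{i'\rightarrow C}$ after $i$'s move and compare it term-by-term with the pre-migration value, splitting into the only two cases that can occur. In case (i), where $C$ is any community other than the one $i$ joins, only the source degree sum $a_{C_{t-1}(i)\setminus\{i'\}}$ drops by $d_\omega(i)$, and a direct substitution shows the gain decreases by exactly $\frac{d_\omega(i')\,d_\omega(i)}{(2m_t)^2}$. In case (ii), where $C=C(j_*)$, the target degree sum $a_{C(j_*)}$ additionally rises by $d_\omega(i)$, so the gain decreases by twice that amount. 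In both cases the new gain is no larger than the old one (strictly smaller whenever the incident degrees are positive).

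From this monotonicity the lemma follows at once: every candidate move for $i'$ was already non-profitable before $i$ relocated, and by the above each such gain can only decrease as a consequence of $i$'s relocation, so $i'$ still has no profitable move afterward and therefore retains its community assignment. Stated contrapositively, the community state of a vertex $i'\in C_{t-1}(i)$ can change only when the invariance of the $e_{i'\rightarrow\cdot}$ terms is broken, which requires an edge between $i$ and $i'$, i.e.\ $i'\in\varGamma(i)$.

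The main obstacle I anticipate is not the algebra but pinning down the right baseline and scope. The clean statement relies on $i'$ being locally stable immediately before the event is applied, so I would make explicit that the analysis isolates the marginal effect of the single migration $i\rightarrow C(j_*)$ (consistent with how the companion Lemma~\ref{lemmaGammai} reads Fig.~\ref{fig:VS}), and that any $i'$ incident to a fresh edge of $\Delta_t$ is treated separately as a source/sink rather than under this lemma. With that scoping fixed, case (ii)---where $i'$ might be tempted toward the very community $C(j_*)$ that just grew heavier---is the only place the argument could plausibly fail, and the doubled penalty there is exactly what rules it out.
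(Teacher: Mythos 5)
Your proof takes essentially the same route as the paper's: since $i'\notin\varGamma(i)$, none of the edge terms $e_{i'\rightarrow\cdot}$ change and only the degree sums $a_{C_{t-1}(i)}$ and $a_{C_{t-1}(j_*)}$ shift by $\mp d_\omega(i)$, so the gain of moving $i'$ into $C_{t-1}(j_*)$ falls by $\frac{2\,d_\omega(i')\,d_\omega(i)}{(2m_t)^2}$ --- exactly the identity the paper derives in its appendix proof. Your write-up is marginally more thorough, in that you also verify the (single-penalty) decrease for target communities other than $C_{t-1}(j_*)$ and make the local-stability baseline explicit, both of which the paper leaves implicit, but the decomposition and conclusion are the same.
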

\begin{proof}
We have already considered the case where $i^\prime\in\varGamma(i)$ (as part of Lemma~\ref{lemmaGammai}). 
Therefore we only need to consider the case where $i^\prime\notin\varGamma(i)$.
This is represented by vertex label $i_2$ in Fig.~\ref{fig:VS}.
Since $i_2$ is already in \Ci{} and since $i_2$ does not share an edge with $i$, a departure of $i$ from \Ci{} can only positive reinforce $i_2$'s membership in \Ci. 
This can be shown more formally by comparing the modularity gains associated with $i_2$. 
Owing to space limitations, we show the expanded proof in Appendix: Section ~\ref{sec:proof_lemma3.2}
in \cite{1904.08553}.
In summary, vertex $i_2$ will have little incentive to change community status and therefore can be excluded from \Rset. 
\end{proof}

Next, we analyze the potential impact of $i$'s migration on members of  $j_*$'s community.  

\begin{lemma}\label{lemmaCj}
	If $j_1\in C_{t-1}(j_*)$, then at time step $t$, a change to  the community status of any such $j_1$ is possible. 
\end{lemma}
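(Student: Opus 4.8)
The plan is to show that the reassignment of $i$ into \Cjstar{} strictly increases the incentive for any member $j_1$ of that community to leave it, so that a change in $j_1$'s community status becomes possible and $j_1$ must accordingly be placed in \Rset. This case corresponds to the vertex label $j_1$ in Fig.~\ref{fig:VS}.

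First I would fix an arbitrary $j_1\in C_{t-1}(j_*)$ together with an arbitrary candidate community $C'\neq C_{t-1}(j_*)$ to which $j_1$ might migrate, and write the modularity gain $\Delta Q_{j_1\rightarrow C'}$ in both the state \emph{before} and the state \emph{after} $i$ joins \Cjstar. Since $i$'s arrival is a community reassignment rather than an edge insertion, the totals $m_t$, $d_\omega(j_1)$ and $d_\omega(i)$ are all unchanged; the only quantities affected are the aggregate weighted degree of the community, $a_{C_{t-1}(j_*)}$, which grows by exactly $d_\omega(i)$ (cf. Eqn.~\ref{eqn:Q}), and the internal-edge weight $e_{j_1\rightarrow C_{t-1}(j_*)\setminus\{j_1\}}$, which grows by $\omega_{j_1,i}$ only if $j_1$ is adjacent to $i$.

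Next I would subtract the two expressions using the constant-time gain computation of \cite{blondel2008}. Every term cancels except those carrying the two changes above, leaving a residual of the form $\tfrac{d_\omega(j_1)\,d_\omega(i)}{(2m_t)^2}-\tfrac{\omega_{j_1,i}}{2m_t}$. I would then split into two cases. If $j_1\in\varGamma(i)$, then $j_1$ is already captured by Lemma~\ref{lemmaGammai} and requires no separate treatment. Otherwise $\omega_{j_1,i}=0$, the residual reduces to the strictly positive quantity $\tfrac{d_\omega(j_1)\,d_\omega(i)}{(2m_t)^2}$, and hence the gain for $j_1$ to leave \Cjstar{} strictly increases.

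Finally I would conclude that, because this leaving-gain rises for every such $j_1$, the assignment that kept $j_1$ in \Cjstar{} at time $t-1$ need no longer be (locally) optimal at time $t$---that is, a change in $j_1$'s community status is \emph{possible}---so every vertex of \Cjstar{} is conservatively added to \Rset{} on line \#8 of Algorithm~\ref{alg:VS}. The main obstacle I anticipate is bookkeeping rather than conceptual: I must expand the gain about the correct post-reassignment state so that the increment in $a_{C_{t-1}(j_*)}$ is isolated cleanly, and I must defer (rather than double-count) the subcase $j_1\in\varGamma(i)$ to Lemma~\ref{lemmaGammai}. The only genuine subtlety is that an increased gain establishes the \emph{possibility} of a move, which is exactly what the lemma asserts, and not that the move is necessarily taken.
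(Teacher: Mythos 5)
Your proposal is correct and follows essentially the same approach as the paper: the paper argues qualitatively that $i$'s migration into $C_{t-1}(j_*)$ inflates $a_{C_{t-1}(j_*)}$ (the negative term of Eqn.~\ref{eqn:Q}) by $d_\omega(i)$, possibly without any offsetting increase in the positive term, so every member $j_1$ must be re-evaluated. Your explicit residual $\tfrac{d_\omega(j_1)\,d_\omega(i)}{(2m_t)^2}-\tfrac{\omega_{j_1,i}}{2m_t}$ is just the quantitative form of that same observation (and matches the algebra the paper carries out in its appendix for the neighboring lemmas), with your deferral of the $j_1\in\varGamma(i)$ subcase to Lemma~\ref{lemmaGammai} playing the role of the paper's conservative "in either case, re-evaluate."
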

\begin{proof}
Regardless of whether $j_1$ shares a direct edge with $j_*$ or not, the migration of a new vertex ($i$) into its present community ($C_{t-1}(j_*)$) increases the negative term in Eqn.~\ref{eqn:Q}. This may or may not be accompanied with an increase in the positive term as well (depending on whether $j_1$ shares an edge with the incoming vertex $i$). In either case, however, we need to re-evaluate the community status of such vertices. 
Therefore, we add $j_1$ to \Rset.
\end{proof}

Finally, we analyze the impact of $i$'s potential migration from \Ci{} to \Cjstar, on vertices that are in neither of those two communities \emph{and} are also not in $\varGamma(i)$.

\begin{lemma}\label{lemmaCk}
If $k\in V_t\backslash \{C(i)\cup C(j)\}$, then at time step $t$, unless $k$ is also in $\varGamma(i)$, there is no need to include $k$ in \Rset.
\end{lemma}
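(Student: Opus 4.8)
The plan is to reuse the term-by-term strategy of Lemmas~\ref{lemmaGammai}--\ref{lemmaCj}: I would write down exactly the quantities on which $k$'s locally-greedy decision depends and check which of them are perturbed when $i$ migrates from \Ci{} to \Cjstar. A vertex decides its affiliation by comparing, over its candidate target communities $C'$ (the communities of its neighbors, together with its own community $C(k)$), the gains $\Delta Q_{k\rightarrow C'}$. Each such gain splits into a first-order connectivity part assembled from the edge weights $e_{k\rightarrow C'}$ and $e_{k\rightarrow C(k)\setminus\{k\}}$ (scaled by $1/(2m_t)$), and a second-order penalty part assembled from the community totals $a_{C(k)\setminus\{k\}}$ and $a_{C'}$ (scaled by $1/(2m_t)^2$). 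The goal is to show that $i$'s move leaves the dominant part of every such comparison untouched.

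First I would invoke the hypothesis $k\notin\varGamma(i)$. Because $k$ shares no edge with $i$, relocating $i$ changes none of $k$'s edge weights into any community: every $e_{k\rightarrow C'}$ and $e_{k\rightarrow C(k)\setminus\{k\}}$ is built only from edges incident on $k$, and none of these touch $i$. Hence the entire first-order connectivity part of every $\Delta Q_{k\rightarrow C'}$ is invariant under the event. Next I would invoke $k\notin C_{t-1}(i)\cup C_{t-1}(j_*)$: since $k$'s own community $C(k)$ is neither of the two communities whose membership changes, the total $a_{C(k)\setminus\{k\}}$ is likewise unchanged.

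What survives is the only channel by which the event can reach $k$ at all: the totals $a_{C_{t-1}(i)}$ and $a_{C_{t-1}(j_*)}$ change by $-d_\omega(i)$ and $+d_\omega(i)$ respectively, and they enter $k$'s gains solely through the second-order penalty and solely for candidate targets $C'$ equal to \Ci{} or \Cjstar. Explicitly, the move raises $\Delta Q_{k\rightarrow C_{t-1}(i)}$ by $d_\omega(k)\,d_\omega(i)/(2m_t)^2$ and lowers $\Delta Q_{k\rightarrow C_{t-1}(j_*)}$ by the same amount. The latter only makes \Cjstar{} less attractive and so cannot destabilize $k$; the former is a pure second-order bump that supplies no new edge-based pull toward \Ci. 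I would therefore conclude that $k$'s argmax is preserved to first order, so consistent with the heuristic design of \dels{} we omit $k$ from \Rset.

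The step I expect to be the real obstacle is making that last conclusion airtight rather than merely ``second-order small.'' A fully rigorous version must exclude the knife-edge case in which $k$ was already nearly indifferent between $C(k)$ and \Ci{} at $t-1$, so that the $O\!\left(d_\omega(i)/(2m_t)^2\right)$ bump flips the argmax. I would handle this by bounding the perturbation against the unchanged connectivity gap---arguing that the $1/(2m_t)^2$-scaled change is dominated by the $1/(2m_t)$-scaled connectivity term for any realistic edge weights---and by noting that this residual coupling is precisely the rare, indirect effect that \dels{} deliberately trades away for runtime, consistent with the ``all (or most of the)'' qualification made when \Rset{} is introduced.
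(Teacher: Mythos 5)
Your proposal is correct, and its core computation is exactly the paper's: since $k\notin\varGamma(i)$ no edge term $e_{k\rightarrow C'}$ changes, since $k\notin C_{t-1}(i)\cup C_{t-1}(j_*)$ the total $a_{C(k)\setminus\{k\}}$ is untouched, and the only perturbation is $a_{C_{t-1}(i)}\mapsto a_{C_{t-1}(i)}-d_\omega(i)$, $a_{C_{t-1}(j_*)}\mapsto a_{C_{t-1}(j_*)}+d_\omega(i)$, shifting the two affected gains by $\pm\, d_\omega(k)\,d_\omega(i)/(2m_t)^2$. Where you differ is in organization and in completeness. The paper splits into three subcases by where $k$'s neighbors lie (in \Ci{} but not $i$, in \Cjstar, or in neither) and, for each, checks only the gain toward \Cjstar, showing it decreases or is unchanged; your single perturbation analysis subsumes all three subcases at once. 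More importantly, you surface the one channel the paper's argument glosses over: the gain $\Delta Q_{k\rightarrow C_{t-1}(i)}$ \emph{increases} after $i$ departs, so \Ci{} becomes marginally more attractive to a vertex $k$ with neighbors there --- precisely the paper's subcase (A), whose expanded write-up in the appendix actually drifts back to discussing $k\in\varGamma(i)$ (a case belonging to Lemma~\ref{lemmaGammai}) and never resolves this. Your knife-edge discussion is the honest treatment of that residual risk, and your resolution --- that a $1/(2m_t)^2$-scaled bump with no accompanying edge-based pull is exactly the indirect effect the heuristic trades away, per the ``all (or most of the)'' qualification --- is the justification the paper leans on implicitly. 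In short, your route buys a cleaner decomposition and an explicit accounting of the lemma's true heuristic boundary, at the cost of admitting that the statement holds only up to second-order knife-edge cases rather than unconditionally; the paper's case split maps more directly onto Fig.~\ref{fig:VS} but leaves that boundary unexamined.
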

\begin{proof}
We consider only vertices $k\notin\varGamma(i)$, as the other case was already covered in Lemma~\ref{lemmaGammai}. 
There are three subcases:
(A) $k$ shares an edge with some vertex in \Ci{} except $i$;
(B) $k$ shares an edge with some vertex in \Cjstar; and
(C) $k$ has no neighbors in \Ci{} or \Cjstar.
However, in none of these cases a migration of $i$ to \Cjstar, could create an incentive for $k$ to move to \Cjstar. 
This is shown formally in Appendix: Section ~\ref{sec:proof_lemma3.4}
in \cite{1904.08553}.
\end{proof}

\section{Experimental Evaluation}
\label{sec:Exper}

\subsection{Experimental Setup}
\label{sec:ExSetup}

{\bf Input data:}
For experimental testing, we used a combination of synthetic and real-world networks. 
Table~\ref{tab:data_stat} shows the input statistics for the inputs used.

\begin{table}[!ht]
\caption{
Input network statistics. 
See Fig. B.2 
of Appendix in \cite{1904.08553} for more details on individual time steps.
}
\label{tab:data_stat}
\centering
\resizebox{\columnwidth}{!}{%
\begin{tabular}{ |c|c|r|r|r| }
\hline
Input & Input graph & No. vertices& No. edges (cumulative) & No. timesteps\\ \hline
\multirow{4}{*}{\thead{Synthetic}} & 50k\_ll & 50,000& 2,362,448 &10 \\ 
 & 50k\_hh & 50,000 & 2,367,024 & 10\\
 & 5M\_ll & 5,000,000 & 213,656,492 &10\\
 & 5M\_hh & 5,000,000 & 213,771,700 &10\\ \hline
\multirow{2}{*}{\thead{Real-world}} & Arxiv HEP-TH
 & 27,770 & 352,807 &11 \\
 & sx-stackoverflow & 2,601,977 & 63,497,050 &2-28 \\
 \hline
\end{tabular}
}
\end{table}
As synthetic inputs, we used a collection of streaming networks 
available on the MIT Graph Challenge 2018 \cite{kao2017streaming}. 
We used two types of networks: 
i) Low block overlap, Low block size variation (abbreviated as ``\texttt{ll}''), and 
ii) High block overlap, High block size variation (abbreviated as ``\texttt{hh}''). 
These two types are in the increasing order of their community complexity
(\texttt{ll}$<$\texttt{hh}). 
However, in both cases, the number of edges grows linearly with time step (see
Appendix Fig. B.2
in \cite{1904.08553}). 
The datasets are available from sizes of 1K nodes to 20M nodes, and each of these datasets has ten time steps. 
For our testing purposes, we used the 50K and 5M datasets. 

As real-world inputs, we used two networks downloaded from SNAP database
\cite{snapnets}: 
\begin{enumerate}
    \item \textbf{Arxiv HEP-TH: }
This is a citation graph for 27,770 papers (vertices) with 352,807 edges
(cross-citations).
Even though the edges are directed, for the purpose of
analysis in this paper we treated them as undirected. 
The dataset covers papers published between 1993 and 2003. 
Consequently, we partitioned this period into 11 time steps (one for each
year). 
\item \textbf{sx-stackoverflow: } 
This is a temporal network of interactions on Stack Overflow, with $2,601,977$
vertices (users) and $63,497,050$ temporal edges (interactions). Interactions
could be one of many types---e.g.,  a user answers another user's query, a
user commented on another user's answers, etc. 
We treated all these interactions equivalently (as edges), and for the purpose
of our analysis we used only the first instance of a user-user interaction as an edge.
\end{enumerate}

{\bf Implementations tested:}
In our experiments, we tested the following implementations:
\begin{compactenum}\itemsep=-0.05ex
\item{\it Static:} 
This is a (static) community detection code run from scratch on the
graph at each time step $i$. 
\texttt{Louvain} \cite{blondel2008} and \texttt{SLM} \cite{waltman2013smart} 
are the two tools we used for this purpose.
\item{\it Baseline:}
This is a community detection code run \emph{incrementally} on the
graph at each time step $i$.
``Incremental'' here implies that at the start of every time step $i$,
we initialize the state of communities to that of the end of the previous time step $i-1$
(for $i>0$).
For this purpose, we implemented our own incremental version of the \texttt{Louvain} tool---we 
call this \texttt{dLouvain-base}); and
for SLM, we use the already available incremental version DSLM \cite{aktunc2015dynamic}---we 
call this \texttt{dSLM-base}).

\item{\it \dels:}
This is a modified baseline version incorporated with our \dels{} step
to identify the \Rset{} set for use within each time step. 
The corresponding two implementations are referred to as
\texttt{dLouvain-\delsab} and \texttt{dSLM-\delsab}.

\end{compactenum}

\subsection{Runtime and Quality Evaluation}
\label{sec:Exp-runtime}

\begin{figure}[tbh]
    \centering
    \captionsetup{width=.9\linewidth}
    	\begin{subfigure}{0.32\textwidth}
		 \includegraphics[width=\textwidth]{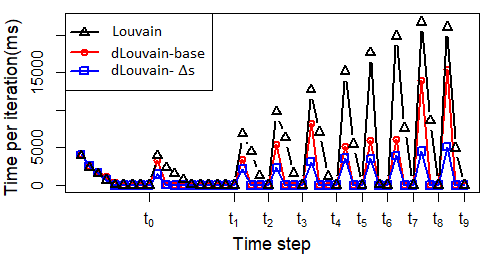}
		\caption{Average time per iteration for $5M\_{ll}$}
		\label{fig:itr_5M_ll}
	\end{subfigure}
	\begin{subfigure}{0.32\textwidth}
		 \includegraphics[width=\textwidth]{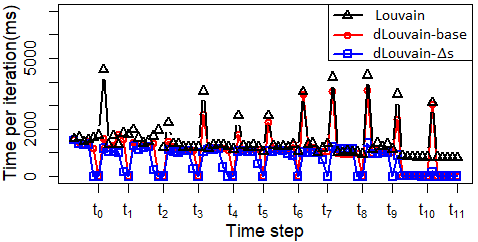}
		\caption{Average time per iteration for sx-stackoverflow}
		\label{fig:itr_sx}
	\end{subfigure}
	\caption{
The variation of average time per iteration by time step, for two
representative inputs: 5M\_ll, and sx-stackoverflow.
The average is given by the mean time to execute an iteration within
each level of a time step. 
All runs reported are from the \texttt{Louvain}-based implementations.
}
\label{fig:iter}
\end{figure}

\begin{figure}[!ht]
	\centering
	\captionsetup{width=.9\linewidth}
     \includegraphics[width= .7\linewidth]{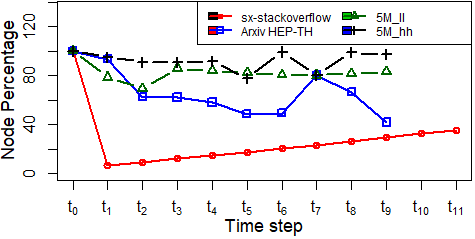}
	\caption{
	The fraction of vertices processed at every iteration, given
	by $\frac{|\mathcal{R}_t|}{|V_t|}$.
	A lower percentage corresponds to a larger savings in performance.
	}
	\label{fig:R_perc}
\end{figure}

First, we evaluate the impact of \dels{} technique on clustering
algorithm's performance. 
Since the main part of the algorithm is the iterative loop that scans
every vertex and assigns communities (the \texttt{for} loop in 
Algorithm~\ref{alg:multi-level}), we measured the average time taken
per iteration of a given level, and the results are plotted in 
Fig.~\ref{fig:iter}.
As can be observed, \dels{} achieves a significant reduction in the time
spent within each iteration (compared to both static and baseline).

\begin{figure*}[!ht]
\centering
    \captionsetup{width=.9\linewidth}
    \begin{subfigure}{0.32\textwidth}
		 \includegraphics[width=\textwidth]{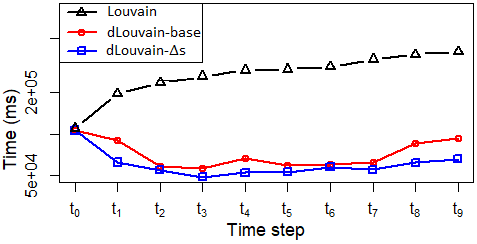}
		\caption{$5M\_{ll}$ (using Louvain) }
			\label{fig:run_5M_ll}
	\end{subfigure}
	\begin{subfigure}{0.32\textwidth}
		 \includegraphics[width=\textwidth]{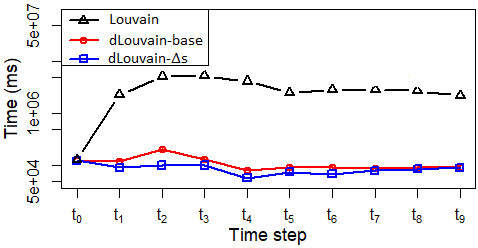}
		\caption{$5M\_{hh}$ (using Louvain)}
			\label{fig:run_5M_hh}
	\end{subfigure}
	\begin{subfigure}{0.32\textwidth}
		 \includegraphics[width=\textwidth]{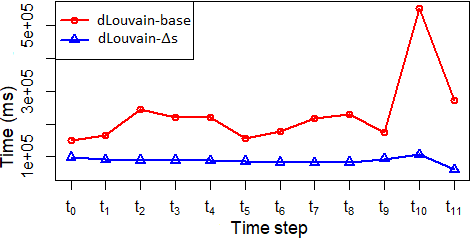}
		\caption{sx-stackoverflow (using Louvain)}
		\label{fig:run_sx_louv}
	\end{subfigure}
	\begin{subfigure}{0.32\textwidth}
	   \includegraphics[width=\textwidth]{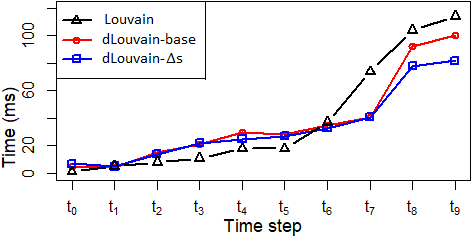}
		\caption{Arxiv HEP-TH (using Louvain)}
		\label{fig:run_cite_louv}
	\end{subfigure}
	\begin{subfigure}{0.32\textwidth}
		 \includegraphics[width=\textwidth]{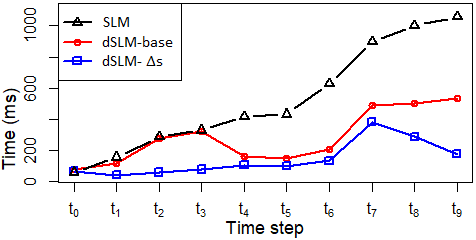}
		\caption{Arxiv HEP-TH (using SLM)}
		\label{fig:run_cite_slm}
	\end{subfigure}
	%
	\caption{
		Runtime as a function of the time steps for various inputs,
		running either Louvain- or SLM-based implementations. The
		static Louvain implementation's run and the static/baseline SLM runs 
		on sx-stackoverflow experienced long runtimes and therefore we
		omit those curves.
	}
	\label{fig:runtime}
\end{figure*}
The savings are a result of the reductions in the number of vertices processed per
iteration in the \delsab{} version (i.e., \Rset{} set size). 
We found the \Rset{} set-based savings to be more significant for the
real-world inputs compared to the synthetic. 
This is shown in Fig.~\ref{fig:R_perc}, which shows the
percentage of vertices processed per iteration---the \Rset{} set
size fractions range from as little as under 10\% in some time steps
(for real-world inputs) to as much as 100\% in some time steps (for
the synthetic inputs). 
This wide variation in efficacy can be attributed to the nature of input
changes. 
Even though both classes of input graphs (synthetic and real-world)
show linear growth rates in size, 
for the synthetic inputs it is harder to benefit from \dels{} because
edges are connected almost randomly from new to existing vertices
(as introduced by an edge sampling randomized procedure \cite{kao2017streaming}); 
whereas, in the real-world networks, changes happen more in a
localized manner, giving an opportunity to benefit from \dels.
In fact, even between the two real-world networks, we observed a significant
difference in the filtering efficacy of the \dels{} technique. 
More specifically, with the ArXiv HEP-TH input, the \Rset-size fraction
varied between $\approx$50\%-90\%; whereas the savings were much more
significant in the case of sx-stackoverflow (and also showing a linear trend).

Next, we evaluate the total runtime including the time taken to
execute all levels. Note that in multi-level codes, the number of
iterations per level may vary across the different implementations. 
Fig.~\ref{fig:runtime} shows the runtime as a function of the time
steps, for different combinations of four inputs (5M\_ll, 5M\_hh, Arxiv HEP-TH,
sx-stackoverflow) and two sets of implementations (Louvain and SLM).

We find that in all cases both baseline and \delsab{} implementations
consistently outperform the respective static implementation,
providing more than 2 orders of magnitude in some cases. 
Between the baseline and \delsab{} implementations, the difference varies
based on the input. For the synthetic inputs, both versions perform
comparably with a slight advantage to the \delsab{} implementation in some
time steps.  As discussed earlier, 
this can be attributed to the random nature of changes induced in the synthetic inputs. 
For the real-world inputs, \delsab{} significantly outperforms baseline,
yielding over 5X speedup in some time steps. For example, in Fig. \ref{fig:run_sx_louv} we have a  5X speedup in time step $t_{10}$, since in time step $t_9$ we have 55158947 edges and this number has increased to 60714297 in time step $t_{10}$ and most of the new edges are intra-community edges which result in less number of nodes for reevaluation and consequently less time.

We also evaluated the quality (measured by modularity) achieved by each
version. In nearly all cases, we observed that the \delsab{} version yielded
almost the same modularity as the baseline version despite its heuristic
nature in selecting subsets of vertices for evaluation within each iteration. 
Owing to space limitations, we show these results in Appendix 
Fig. B.3 
in \cite{1904.08553}.

\subsection{Effect of Varying the Temporal Resolution}
\label{sec:tem_bin}

\begin{figure}[!ht]
    \centering
    \captionsetup{width=.8\linewidth}
    	\begin{subfigure}{0.32\textwidth}
		 \includegraphics[width=\textwidth]{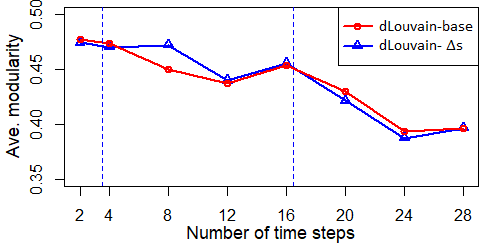}
	\caption{The change in average modularity achieved for different
	temporal resolutions (input: sx-stackoverflow)}
		\label{fig:ave_mod}
	\end{subfigure}
	\begin{subfigure}{0.32\textwidth}
		   \includegraphics[width=0.95\textwidth]{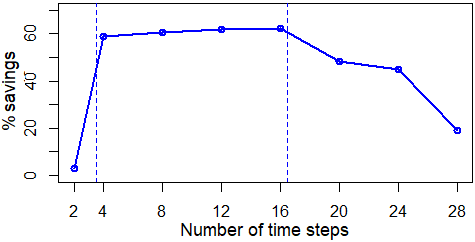}
		\caption{Percentage savings in total time for \dels{}, compared to baseline, for different temporal resolutions (input: sx-stackoverflow)}
		\label{fig:percentage_saving}
	\end{subfigure}
	\caption{
	Plots showing the effect of varying the temporal resolution---as measured
	by the number of temporal bins (i.e., time steps) used to partition
	the input graph stream. 
	The resolution of partitioning changes from coarser to finer, from
	left to right on the x-axis. 
	}
	\label{fig:sx}
\end{figure}

In many real-world use-cases, even though the input graph is available as a
temporal stream, the appropriate temporal scale to analyze them is not known
{\it a priori}. In fact, this scale is an input property that a domain
expert expects to discover through the analysis of dynamic communities. 
In order to facilitate such a study through dynamic community detection,
in this section, we study the effect of varying the \emph{temporal
resolution}, as defined by the number of time steps used to partition a graph
stream, on the output clustering. 

More specifically, using the sx-stackoverflow input, we first generated
multiple temporal datasets, each of which representing the input stream
divided into a certain number of time steps, ranging from  
\{2, 4, 8, 12, 16, 20, 24, 28\} steps. 
Note that in this scheme, there are multiple nested hierarchies---for instance, the
16-time steps partitioning can be achieved by splitting each of the 8-time
steps partitions into two. 
Subsequently, we ran our \delsab-enabled incremental clustering
method on the different temporal input datasets (we used dLouvain-\delsab{}
for this analysis).

Fig.~\ref{fig:sx} shows the results of our analysis. 
Fig.~\ref{fig:sx}a shows the change in average modularity as we increase the
temporal resolution from coarser to finer (left to right along the x-axis). 
We observe that the modularity values decline gradually until around 16 time
steps, after which the decline starts to accelerate. 
The decline in modularity suggests that the community-based structure of the
underlying network (at different scales) starts to weaken as we increase the
temporal resolution. This is to be expected as the temporally binned
graphs tend to only become sparser with increasing resolution.
Notably, the more rapid slide that starts to appear after the 16 time
steps-resolution suggests that the community structure starts to deteriorate
after that resolution for \emph{this} input.

Interestingly, this property is better captured by Fig.~\ref{fig:sx}b, which
shows the \% savings  in total runtime achieved by our \dels{} filtering scheme. Intuitively, when the \% savings
remains approximately steady (highlighted by the plateau region from
the resolution of 4 time steps to 16 time steps), it means that the nature of the
evolution of the graphs within those resolutions is also relatively consistent.
However, a steeper decline (on either end of the plateau) suggests that under
those temporal resolution scales the temporally partitioned graphs become
either too sparse (right) or too dense (left). 

Note that software speed becomes an important enabling factor
for conducting these types of experiments, where one needs to run the dynamic
community analysis repeatedly under different configurations.

\section{Conclusion}
\label{sec:Conclu}

Conducting community detection-based analysis on large dynamic networks is a
time-consuming problem and there have been many incremental strategies
proposed. In this paper, we visit a subproblem in this context---one of
identifying vertices that are likely to be impacted by a new batch of
changes. We presented a generic technique called \dels{} that examines and
selects provably ``essential'' vertices for evaluation during the $i^{th}$
time step based on the loci of the changes. Subsequently we incorporated this
technique into two widely-used community detection tools (Louvain and SLM)
and demonstrated speedups in performance without compromising on the output
quality, for a collection of synthetic and real-world inputs.  
Future research directions include: i) extension of the \dels{} technique to
edge deletions; ii) parallelization on multicore platforms; iii)
extensions to other incremental community detection tools; and iv)
application and dynamic community characterization on large-scale real-world
networks.





\section*{ACKNOWLEDGMENT}
This research was supported by U.S. National Science Foundation grant 1815467.



\begin{thebibliography}{10}
\providecommand{\url}[1]{#1}
\csname url@samestyle\endcsname
\providecommand{\newblock}{\relax}
\providecommand{\bibinfo}[2]{#2}
\providecommand{\BIBentrySTDinterwordspacing}{\spaceskip=0pt\relax}
\providecommand{\BIBentryALTinterwordstretchfactor}{4}
\providecommand{\BIBentryALTinterwordspacing}{\spaceskip=\fontdimen2\font plus
\BIBentryALTinterwordstretchfactor\fontdimen3\font minus
  \fontdimen4\font\relax}
\providecommand{\BIBforeignlanguage}[2]{{%
\expandafter\ifx\csname l@#1\endcsname\relax
\typeout{** WARNING: IEEEtran.bst: No hyphenation pattern has been}%
\typeout{** loaded for the language `#1'. Using the pattern for}%
\typeout{** the default language instead.}%
\else
\language=\csname l@#1\endcsname
\fi
#2}}
\providecommand{\BIBdecl}{\relax}
\BIBdecl

\bibitem{brandes2008modularity}
U.~Brandes, D.~Delling, M.~Gaertler, R.~Gorke, M.~Hoefer, Z.~Nikoloski, and
  D.~Wagner, ``On modularity clustering,'' \emph{IEEE transactions on knowledge
  and data engineering}, vol.~20, no.~2, pp. 172--188, 2008.

\bibitem{fortunato2010community}
S.~Fortunato, ``Community detection in graphs,'' \emph{Physics reports}, vol.
  486, no. 3-5, pp. 75--174, 2010.

\bibitem{aktunc2015dynamic}
R.~Aktunc, I.~H. Toroslu, M.~Ozer, and H.~Davulcu, ``A dynamic modularity based
  community detection algorithm for large-scale networks: Dslm,'' in
  \emph{Advances in Social Networks Analysis and Mining (ASONAM), 2015 IEEE/ACM
  International Conference on}.\hskip 1em plus 0.5em minus 0.4em\relax IEEE,
  2015, pp. 1177--1183.

\bibitem{greene2010tracking}
D.~Greene, D.~Doyle, and P.~Cunningham, ``Tracking the evolution of communities
  in dynamic social networks,'' in \emph{2010 international conference on
  advances in social networks analysis and mining}.\hskip 1em plus 0.5em minus
  0.4em\relax IEEE, 2010, pp. 176--183.

\bibitem{zakrzewska2015dynamic}
A.~Zakrzewska and D.~A. Bader, ``A dynamic algorithm for local community
  detection in graphs,'' in \emph{Proceedings of the 2015 IEEE/ACM
  International Conference on Advances in Social Networks Analysis and Mining
  2015}.\hskip 1em plus 0.5em minus 0.4em\relax ACM, 2015, pp. 559--564.

\bibitem{newman2004fast}
M.~E. Newman, ``Fast algorithm for detecting community structure in networks,''
  \emph{Physical review E}, vol.~69, no.~6, p. 066133, 2004.

\bibitem{blondel2008}
V.~D. Blondel, J.-L. Guillaume, R.~Lambiotte, and E.~Lefebvre, ``Fast unfolding
  of communities in large networks,'' \emph{Journal of statistical mechanics:
  theory and experiment}, vol. 2008, no.~10, p. P10008, 2008.

\bibitem{xie2013labelrankt}
J.~Xie, M.~Chen, and B.~K. Szymanski, ``Labelrankt: Incremental community
  detection in dynamic networks via label propagation,'' in \emph{Proceedings
  of the Workshop on Dynamic Networks Management and Mining}.\hskip 1em plus
  0.5em minus 0.4em\relax ACM, 2013, pp. 25--32.

\bibitem{bassett2013robust}
D.~S. Bassett, M.~A. Porter, N.~F. Wymbs, S.~T. Grafton, J.~M. Carlson, and
  P.~J. Mucha, ``Robust detection of dynamic community structure in networks,''
  \emph{Chaos: An Interdisciplinary Journal of Nonlinear Science}, vol.~23,
  no.~1, p. 013142, 2013.

\bibitem{hopcroft2004tracking}
J.~Hopcroft, O.~Khan, B.~Kulis, and B.~Selman, ``Tracking evolving communities
  in large linked networks,'' \emph{Proceedings of the National Academy of
  Sciences}, vol. 101, no. suppl 1, pp. 5249--5253, 2004.

\bibitem{seifi2012community}
M.~Seifi and J.-L. Guillaume, ``Community cores in evolving networks,'' in
  \emph{Proceedings of the 21st International Conference on World Wide
  Web}.\hskip 1em plus 0.5em minus 0.4em\relax ACM, 2012, pp. 1173--1180.

\bibitem{lin2008facetnet}
Y.-R. Lin, Y.~Chi, S.~Zhu, H.~Sundaram, and B.~L. Tseng, ``Facetnet: a
  framework for analyzing communities and their evolutions in dynamic
  networks,'' in \emph{Proceedings of the 17th international conference on
  World Wide Web}.\hskip 1em plus 0.5em minus 0.4em\relax ACM, 2008, pp.
  685--694.

\bibitem{maillard2009modularity}
P.~Maillard, R.~G{\"o}rke, C.~Staudt, and D.~Wagner, ``Modularity-driven
  clustering of dynamic graphs,'' in \emph{Experimental Algorithms, 9th
  International Symposium, SEA 2010}.\hskip 1em plus 0.5em minus 0.4em\relax
  Springer, 2009, pp. 436--448.

\bibitem{lu2015parallel}
H.~Lu, M.~Halappanavar, and A.~Kalyanaraman, ``Parallel heuristics for scalable
  community detection,'' \emph{Parallel Computing}, vol.~47, pp. 19--37, 2015.

\bibitem{girvan2002community}
M.~Girvan and M.~E. Newman, ``Community structure in social and biological
  networks,'' \emph{Proceedings of the national academy of sciences}, vol.~99,
  no.~12, pp. 7821--7826, 2002.

\bibitem{gregory2008fast}
S.~Gregory, ``A fast algorithm to find overlapping communities in networks,''
  in \emph{Joint European Conference on Machine Learning and Knowledge
  Discovery in Databases}.\hskip 1em plus 0.5em minus 0.4em\relax Springer,
  2008, pp. 408--423.

\bibitem{clauset2004finding}
A.~Clauset, M.~E. Newman, and C.~Moore, ``Finding community structure in very
  large networks,'' \emph{Physical review E}, vol.~70, no.~6, p. 066111, 2004.

\bibitem{de2011generalized}
P.~De~Meo, E.~Ferrara, G.~Fiumara, and A.~Provetti, ``Generalized louvain
  method for community detection in large networks,'' in \emph{2011 11th
  International Conference on Intelligent Systems Design and
  Applications}.\hskip 1em plus 0.5em minus 0.4em\relax IEEE, 2011, pp. 88--93.

\bibitem{waltman2013smart}
L.~Waltman and N.~J. Van~Eck, ``A smart local moving algorithm for large-scale
  modularity-based community detection,'' \emph{The European Physical Journal
  B}, vol.~86, no.~11, p. 471, 2013.

\bibitem{karypis1998fast}
G.~Karypis and V.~Kumar, ``A fast and high quality multilevel scheme for
  partitioning irregular graphs,'' \emph{SIAM Journal on scientific Computing},
  vol.~20, no.~1, pp. 359--392, 1998.

\bibitem{kepner2018design}
J.~Kepner, S.~Samsi, W.~Arcand, D.~Bestor, B.~Bergeron, T.~Davis, V.~Gadepally,
  M.~Houle, M.~Hubbell, H.~Jananthan \emph{et~al.}, ``Design, generation, and
  validation of extreme scale power-law graphs,'' \emph{arXiv preprint
  arXiv:1803.01281}, 2018.

\bibitem{snapnets}
J.~Leskovec and A.~Krevl, ``{SNAP Datasets}: {Stanford} large network dataset
  collection,'' \url{http://snap.stanford.edu/data}, Jun. 2014.

\bibitem{1904.08553}
N.~Zarayeneh, A.~Kalyanaraman, ``A Fast and Efficient Incremental Approach toward Dynamic
Community Detection,'' \emph{arXiv preprint
  arXiv:1904.08553}, 2019.

\end{thebibliography}


\newpage
\appendix
\label{sec:appendix}
  
\subsection{Expanded Versions of Proofs}
\label{sec:app_proofs}

\subsubsection{For Lemma~3.2}
\label{sec:proof_lemma3.2}

{\bf Lemma:}
        If $i^\prime\in$ \Ci, then at time step $t$, a change to the community state of $i^\prime$ is possible, \emph{only if} $i^\prime$ is also a neighbor of $i$. 

\begin{proof}
We have already considered the case where $i^\prime\in\varGamma(i)$ (as part of Lemma~\ref{lemmaGammai}). 
Therefore we only need to consider the case where $i^\prime\notin\varGamma(i)$.
This is represented by vertex label $i_2$ in Fig.~\ref{fig:VS}.
Since $i_2$ is already in \Ci{} and since $i_2$ does not share an edge with $i$, a departure of $i$ from \Ci{} can only positive reinforce $i_2$'s membership in \Ci. 
More formally, this can be shown by comparing the old vs. new modularity gains, ${\Delta Q}_{i_2\rightarrow C(j_*)}$, resulting from moving $i$ to $C_{t-1}(j_*)$:  
 \begin{equation*}
 \begin{split}
   {\Delta Q}^{\textrm{new}}_{i_2\rightarrow C_{t-1}(j_*)}  =& \frac{e_{i_2\rightarrow C_{t-1}(j_*) } - e_{i_2\rightarrow C_{t-1}(i) \backslash \{i_2\}}}{2m_t} \\
   +& \frac{d_\omega(i_2).(a_{C_{t-1}(i) \backslash \{i_2\}} - d_\omega(i)) }{(2m_t)^{2}} \\
   -&\frac{d_\omega(i_2) .(a_{C_{t-1}(j_*)} + d_\omega(i))}{(2m_t)^{2}}\\
   =& {\Delta Q}^{\textrm{old}}_{i_2\rightarrow C_{t-1}(j_*)}  - \frac{2 d_\omega(i_2). d_\omega (i))}{(2m_t)^{2}}
   \end{split}
\end{equation*}
Since the new modularity gain is less than the old, vertex $i_2$ will have no incentive to change community status, and therefore can be excluded from \Rset.
\end{proof}

Note that the above proof only analyzes the direct impact that vertex $i$'s migration from \Ci{} will have on vertices of \Ci{} but not in $\varGamma(i)$.
However, there could still be an indirect impact (cascading from Lemma~\ref{lemmaGammai})---e.g.,
the migration of a vertex $i_1\in\varGamma(i)$ may trigger the migration of a vertex $i_2\notin\varGamma(i)$ but in $\varGamma(i_1)$, and so on.
However, as this distance from the locus of change increases, the likelihood of its impact is expected
to decay rapidly in practice (as observed in our experiments).

\subsubsection{For Lemma~3.4}
\label{sec:proof_lemma3.4}

{\bf Lemma:}
If $k\in V_t\backslash \{C(i)\cup C(j)\}$, then at time step $t$, unless $k$ is also in $\varGamma(i)$, there is no need to include $k$ in \Rset.

\begin{proof}
We consider only vertices $k\notin\varGamma(i)$, as the other case was already covered in Lemma~\ref{lemmaGammai}.
There are three subcases:
(A) $k$ shares an edge with some vertex in \Ci{} except $i$;
(B) $k$ shares an edge with some vertex in \Cjstar; and
(C) $k$ has no neighbors in \Ci{} or \Cjstar.
However, in none of these cases a migration of $i$ to \Cjstar, create an incentive for $k$ to move to \Cjstar.
This can be shown more formally using modularity gains as follows.


\begin{enumerate}[(A)]
\item
                We can follow the same logic in lemma ~\ref{sec:proof_lemma3.4}. If $k$ is a neighbor of $ i $ which are located in the other communities like $k$ in Fig. ~\ref{fig:VS}, the modulaity gain ${\Delta Q}_{ k \rightarrow C_{t-1}(j_*)}$ will increase; therefore $k$ should be considered for reevaluation.
\item
                 Any node $k$ connected to \Cjstar{} which is located in other communities have the following modularity gain after moving $i$ to \Cjstar{}{}:
                 \begin{equation*}
                 \begin{split}
           {\Delta Q}^{\textrm{new}}_{k\rightarrow C_{t-1}(j_*)} \approx & \frac{e_{ k \rightarrow C_{t-1}(j_*)} - e_{k\rightarrow C_{t-1} \backslash \{k \}}}{2m_t} \\
           +& \frac{d_\omega(k).a_{C_{t-1}(k) \backslash \{k\}}}{(2m_t)^{2}}\\
           -&\frac{d_\omega(k).(a_{C_{t-1}(j_*)} + d_\omega (i))}{(2m_t)^{2}}\\
           \approx & {\Delta Q}^{\textrm{old}}_{k \rightarrow C_{t-1}(j_*)}  - \frac{d_\omega(k).d_\omega (i)}{(2m_t)^{2}}
           \end{split}
	   \end{equation*}
                 which means the modularity gain will decrease and $k$ will not change its community membership.  
\item
                If a node $k$ is not connected to either \Ci{} or \Cjstar{} like $k\prime$ in Fig. ~\ref{fig:VS}, the modularity gain of $k$ moving to any communities will not change and therefore the node will not reconsider moving to other community.
\end{enumerate}
\end{proof}

\subsection{Figures}
\label{sec:app_figures}
\counterwithin{figure}{subsection}
\setcounter{figure}{0}  
\begin{figure}[tbh]
        \centering
        \captionsetup{width=.9\linewidth}
        \framebox{\parbox{3in}{\includegraphics[width=\linewidth]{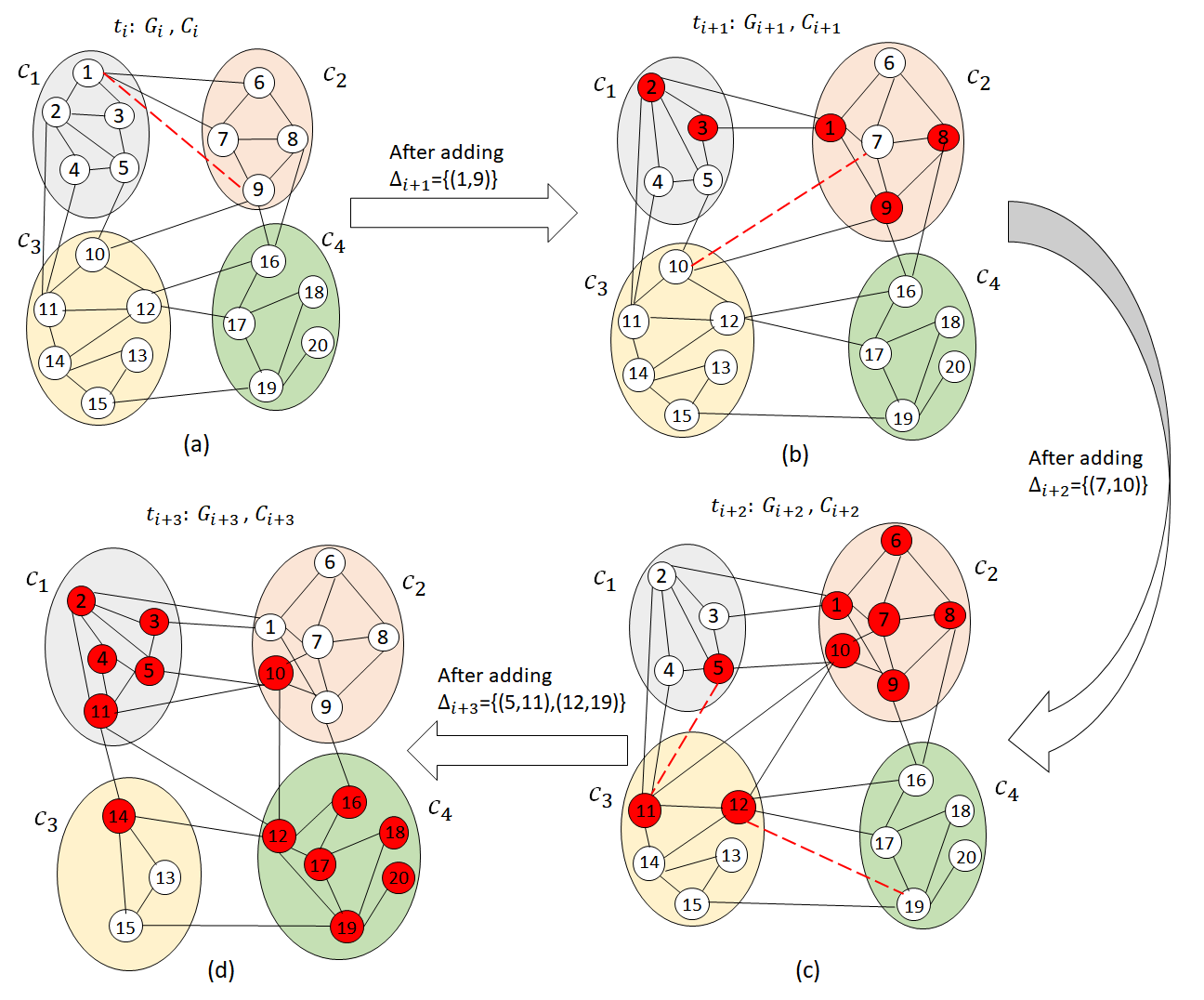}}}\qquad
        \caption{This is an example of \dels{} method for three time steps. First, by adding the edge $(1,9)$ to $G_i$ as $\Delta Q_{(1 \rightarrow c_2)} > 0$, our heuristic method will change the community membership of node $1$ to $c_2$ and the subset of nodes for reevaluation will be $R=\{1,2,3,8,9\}$ which are shown in red in part (b). Then, the edge $(7,10)$ is added with $\Delta Q_{(10 \rightarrow c_2)}>0$;therefore, the community membership of node $10$ will be changed to $c_2$ and consequently, the subset of nodes for reevaluation will be $R=\{1,5,6,7,8,9,10,11,12\}$ which is displayed in red in part (c). Finally, two edges $\{(5,11),(12,19)\}$ are added and as $\Delta Q_{(5 \rightarrow c_1)}>0$, $\Delta Q_{(12 \rightarrow c_4)}>0$ nodes 5 and 12 will join community $c_1$ and $c_4$, respectively. The subset of nodes for reevaluation will be $R=\{2,3,4,5,10,11,12,14,16,17,18,19,20\}$ which is displayed in red in part (d). }
        \label{figApp:VSexample}
\end{figure}

\begin{figure}[tbh]
    \centering
    \captionsetup{width=.9\linewidth}
    	\begin{subfigure}{0.32\textwidth}
		 \includegraphics[width=\textwidth]{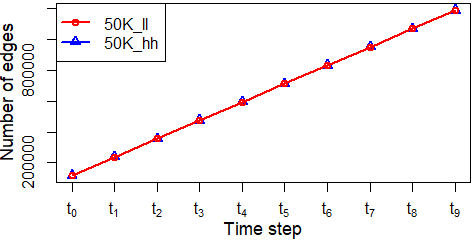}
		\caption{Number of edges for $50k\_ll$}
		\label{figApp:no_edges_50k}
	\end{subfigure}
	\begin{subfigure}{0.32\textwidth}
		 \includegraphics[width=\textwidth]{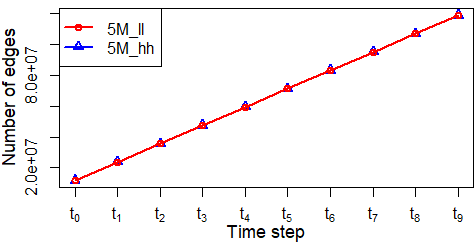}
		\caption{Number of edges for $5M\_ll$}
		\label{figApp:no_edges_5M}
	\end{subfigure}
	\begin{subfigure}{0.32\textwidth}
		 \includegraphics[width=\textwidth]{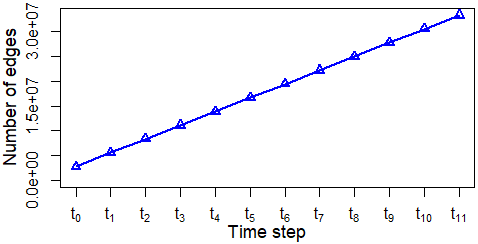}
		\caption {Number of edges per timestep for sx-stackoverflow}
		\label{figApp:no_edge_cite_sx}
	\end{subfigure}
	\begin{subfigure}{0.32\textwidth}
		 \includegraphics[width=\textwidth]{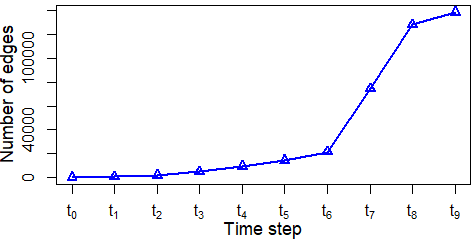}
		\caption {Number of edges per timestep for Arxiv HEP-TH}
		\label{figApp:no_edge_cite_sx}
	\end{subfigure}
	\caption{The change of input size per time step for synthetic and real world input}
\label{figApp:Inputsize}
\end{figure}

\begin{figure}[tbh]
    \centering
    \captionsetup{width=.9\linewidth}
	\begin{subfigure}{0.32\textwidth}
		 \includegraphics[width=\textwidth]{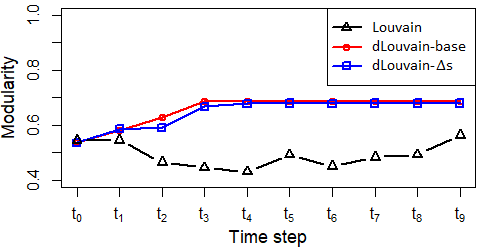}
		\caption{Modularity for $5M\_{hh}$ data}
		\label{figApp:mod_5M_hh}
	\end{subfigure}
	\begin{subfigure}{0.32\textwidth}
		 \includegraphics[width=\textwidth]{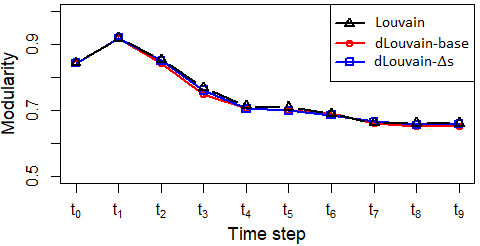}
		\caption{Modularity for Arxiv HEP-TH}
		\label{figApp:mod_cite_louv}
	\end{subfigure}
	\begin{subfigure}[b]{0.32\textwidth}
		 \includegraphics[width=\textwidth]{Modularity_citation.png}
		\caption{Modularity for sx-stackoverflow}
		\label{figApp:mod_sx_louv}
	\end{subfigure}
	\caption{Part (a) represents modularity for synthetic networks with a high level of overlap and high level of size variation between blocks (hh) of size 5M nodes. Part (b) represents modularity for Arxiv HEP-TH, and part (c) shows modularity for sx-stackoverflow.}
	\label{figApp:modularity}
\end{figure}

\end{document}